\newtheorem{theorem}{Theorem}
\newtheorem{lemma}{Lemma}
\newtheorem{defn}{Definition}
\newtheorem{remark}{Remark}
\title[Limit Sets in Multi-Agent Learning]{Poincar\'{e}-Bendixson Limit Sets in Multi-Agent Learning}
\author{Aleksander Czechowski}
\affiliation{
  \institution{Delft University of Technology}
  \country{The Netherlands}
  }
\email{a.t.czechowski@tudelft.nl}
\author{Georgios Piliouras}
\affiliation{
  \institution{Singapore University of Technology and Design}
  \country{Singapore}
  }
\email{georgios@sutd.edu.sg}
\keywords{Replicator Dynamics;
Follow-the-Regularized Leader;
Polymatrix Games;
Poincaré-Bendixson Theorem;
Regret Minimization}
\newcommand{\BibTeX}{\rm B\kern-.05em{\sc i\kern-.025em b}\kern-.08em\TeX}
\begin{abstract}
A key challenge of evolutionary game theory and multi-agent learning is to characterize the limit behavior of game dynamics. Whereas convergence is often a property of learning algorithms in games satisfying a particular reward structure (e.g., zero-sum games), even basic learning models, such as the replicator dynamics, are not guaranteed to converge for general payoffs. Worse yet, chaotic behavior is possible even in rather simple games, such as variants of the Rock-Paper-Scissors game. Although chaotic behavior in  learning dynamics can be precluded by the celebrated Poincar\'e-Bendixson theorem, it is only applicable to low-dimensional settings. Are there other characteristics of a game that can force regularity in the limit sets of learning?
 We show that behavior consistent with the Poincaré-Bendixson
theorem (limit cycles, but no chaotic attractor) can follow purely from the topological structure of the interaction graph, even for high-dimensional settings with an arbitrary number of players and arbitrary payoff matrices. 
We prove our result for a wide class of follow-the-regularized leader (FoReL) dynamics, which generalize replicator dynamics, for binary games characterized interaction graphs where the payoffs of each player are only affected by one other player 
(i.e., interaction graphs of indegree one).
Since chaos occurs already in games with only two players and three strategies, this class of non-chaotic games may be considered maximal. 
Moreover, we provide simple conditions under which such behavior translates
into efficiency guarantees, implying that FoReL learning achieves
time-averaged sum of payoffs at least as good as that of a
Nash equilibrium, thereby connecting the topology of the dynamics to social-welfare analysis.
\end{abstract}
\begin{document}


\pagestyle{fancy}
\fancyhead{}


\maketitle 
\section{Introduction}

Dynamical systems and evolutionary game theory have been instrumental in modern research on multi-agent learning~\cite{bloembergen2015evolutionary, tuyls2005evolutionary, rodrigues2009dynamic, gatti2013efficient, Cesa, Shalev, shoham2008multiagent}.
In particular, characterizing the convergence and limit sets of learning trajectories is vital for understanding the long-term behavior of multi-agent systems.
However, even in simple games, such as Rock-Paper-Scissors~\cite{Sato2,Soda14},
models of evolution and learning are not guaranteed to converge;  even beyond cycles, long-term behavior may lead to chaotic behavior,
known to the dynamical systems community from, e.g., weather models~\cite{lorenz}.
Not only does chaos manifest itself even in simple games with two players, but moreover, a string of recent results suggests that such chaotic, unpredictable behavior may indeed be the norm across a variety of simple low-dimensional game dynamics~\cite{Strien2011,PalaiopanosPP17,benaim2012perturbations,BaileyEC18,BaileyAAMAS19,2020arXiv200513996K,sanders2018prevalence,galla2013complex,frey2013cyclic,chotibut2020family}. Importantly, these results are persistent even for the well-known class of Follow-the-Regularized-leader (FoReL) dynamics~\cite{Mertikopoulos,cheung2019vortices}, despite the fact that FoReL dynamics include some of the most widely studied learning dynamics such as replicator dynamics~\cite{taylor,Hofbauer98}, which is the continuous-time analogue of the Multiplicative Weights Update meta-algorithm~\cite{Arora05themultiplicative}, well known for its optimal regret properties.
Finally, the emergence of chaotic behavior 
 has been connected with increased social inefficiency, which shows that 
 chaotic dynamics can lead to highly inefficient outcomes~\cite{chotibut2019route, Roughgarden}. Such profoundly negative results raise the following questions:

\begin{itemize}
    \item Do simple, robust conditions exist under which learning behaves well?
    \item Which types of games lie at the ``edge of chaos''?
    \item Does dynamic simplicity translate to high-efficiency and social welfare?
\end{itemize}

  Traditionally, a lot of work has focused on showing that, in specific classes of games (e.g., zero-sum or potential games), learning dynamics can lead to convergence and equilibration, see~\cite{Sandholm,Cesa,young2004strategic,fudenberg1998theory} and references therein. 
 Few results span over to general sum games and games of arbitrary payoff structures;
 however, such general approaches are arguably essential in modern research on multi-agent learning.
 For instance, unstructured payoffs can occur naturally when stochastic extensive form games
 are used to create empirical normal form games, by averaging payoffs from simulations for combinations of strategies~\cite{lanctot2017unified, muller2020generalized, wellman2006methods}.
Unstructured payoffs also arise in many real-world applications, such as, e.g., modeling the impact of investing strategies of large funds on the stock market.
While equilibration may not always be possible in such cases, one can still wish to ensure a regularity of sorts in the learning outcomes of the multi-agent system. In particular, 
the famous Poincar\'e--Bendixson theorem (Theorem~\ref{thm:pb}) ensures that two-dimensional continuous learning and adaptation dynamics never form truly chaotic outcomes. However, this comes at a cost: although no specific payoff structure is needed, the underlying learning dynamics must be at most two dimensional.

\paragraph*{Our approach and results.} Rather than by making assumptions on the reward structure or on the dimensionality, we explore a different type of constraint in games. 
We show that the limit behavior of learning
can be determined solely by the topological-combinatorial structure of the game, regardless of the number of players,
or algebraic correlations between the payoffs (e.g., zero-sum).
Firstly, we restrict ourselves to \emph{binary} games~\cite{menezes2006binary, blonski1999anonymous,yu2020computing}, where players have two strategies.
Secondly, we assume that every player can be affected by the behavior of up to one other player. Finally, we add a technical restriction that the game is connected, meaning that it cannot be decomposed into two subgames that are completely independent of each other.
Such games encompass, among others, all $2 \times 2$ games~\cite{goforth2004topology}, Jordan's game~\cite{jordan1993three,gaunersdorfer1995fictitious,hart2003uncoupled}, and easily identifiable subclasses of real-world systems where the graph structure is evident, such as certain traffic networks~\cite{alvarez2010game, Vlassis}, supply chains~\cite{cachon2006game},
or problems of water allocation in deltas~\cite{ambec2002sharing,khmelnitskaya2010values}.
Under these assumptions, we prove in Section~\ref{sec:pb} our main contribution in the form of Theorems~\ref{thm:main} and \ref{thm:main2}, which say that the limit behavior of FoReL learning of these games is always consistent with the 
Poincar\'e--Bendixson theorem.

Having excluded the presence of chaos, we further analyze quantitative properties of binary games, which admit cyclic interaction graphs.
In Section~\ref{sec:ne} we show that, under additional but structurally robust assumptions 
on the payoff matrices 
  (i.e., assumptions that remain valid after small perturbations of the payoff matrices and so are suitable, for example, for empirical payoff matrices), 
one can derive positive results about the efficiency of the time-averaged behavior of the dynamics regardless of whether they are convergent. As is typically the case in the price of anarchy (PoA) literature~\cite{koutsoupias1999worst}, we focus on the measure of \emph{social welfare}, which is the sum of individual payoffs. Whereas the typical PoA literature argues that regret-minimizing dynamics (such as FoReL) are at most a constant factor worse than the behavior of the worst-case Nash equilibrium~\cite{Roughgarden,roughgarden2016twenty}, we instead show that FoReL dynamics are always at least as efficient as the worst-case Nash equilibrium.
Finally, Section~\ref{sec:examples} provides examples of games satisfying our assumptions and their possible limit behavior, as well as a counterexample in the form of a simple binary game that breaks our assumptions and induces chaotic learning dynamics.

\paragraph{Related work.}
First of all, we consider several papers containing complementary results in the form of examples of simple FoReL systems with chaotic dynamics. In addition to the papers mentioned in the introduction, we highlight a chaotic example of Sato et al.~\cite{Sato2} that involves a two-player, three-action game and two complex/chaotic examples in three-player binary games without
 structured interactions~\cite{Plank,Peixe}.
 Comparing these with our assumptions (i.e., binary games and previous-neighbor interactions), we see that
 our results establish a maximal class of games for which such regularity results on limit sets are possible. 

Research that considers non-convergence but focuses on non-chaoticity is scarce. In the closest works to ours, \cite{Nagarajan, NagarajanChaos, Flokas}, the
 authors leverage 
 the Poincar\'{e}--Bendixson theorem to show that the limit behavior of bounded learning trajectories in certain learning systems can be either convergent or cyclic, and in particular no chaotic attractor is possible. 
 However, they do so by assuming low dimensionality (three-player limit) or a nongeneric structure on the set of allowable games, which allows for dimensionality reduction (i.e., a network of $2\times 2$ zero-sum, or coordination games). 
 In terms of connections between cyclic behavior and the efficiency of learning dynamics,~\cite{kleinberg} shows that, for a class of three players, two strategy games with a cyclic attractor can result in social welfare (sum of payoffs) that can be better than the Nash equilibrium payoff; however, the result is once again constrained to the exact game theoretic model.

\section{Preliminaries}

\subsection{Normal form games}

A \emph{finite game in normal form} consists of a set of $N$ players,
each with a finite set of \emph{strategies} $\mathcal{A}_i$.
The preferences of each player are represented by the payoff function
$u_i: \prod_i \mathcal{A}_i \to \mathbb{R}$.
To model the behavior at scale or probabilistic strategy choices, one assumes that players 
use \emph{mixed strategies}, namely, probability distributions
$(x_{i\alpha_i})_{\alpha_i \in \mathcal{A}_i} \in \Delta ( \mathcal{A}_i ) =: \mathcal{X}_i$.
With a slight abuse of notation, the expected payoff of player $i$ in the profile
$(x_{i\alpha_i})_{i,\alpha_i}$ is denoted  $u_i$  and given by
\begin{equation}
  u_i(x) = \Sigma_{\alpha_1 \in \mathcal{A}_1, \dots \alpha_N \in \mathcal{A}_N} u_i(\alpha_1, \dots, \alpha_N) x_{1\alpha_1} \dots  x_{N\alpha_N}.
\end{equation}
A mixed strategy $\hat{x}$ is a \emph{Nash equilibrium}
iff $\forall\ i\text{ and }\forall x: x_{j}=\hat{x}_{j}, \ j \neq i$ we have $u_i(x) \leq u_i(\hat{x})$.
In other words, no player can unilaterally increase their payoff by changing their strategy distribution.
The \emph{minimax value} for player $i$ is given by $\min_{x_{-i}} \max_{x_i} u_i(x)$,
where $x_{-i}:=(x_j)_{j \neq i}$.
This is the smallest possible value that player $i$ can be forced to attain by other players, 
without them knowing the strategy of player $i$.
We call a game \emph{binary} iff $|\mathcal{A}_i|=2$ for all $i$.

\subsection{Graphical polymatrix games}

To model the topology of interactions between players, we restrict our attention to a subset of normal form games, where the structure of interactions between players can be encoded by a graph of two-player normal form subgames, leading us to consider so-called \emph{graphical polymatrix games} (GPGs)~\cite{Kearns, Yanovskaya, Howson}.
A simple directed graph is a pair $(\mathcal{V},\mathcal{E})$, where $\mathcal{V}=\{1,\dots ,N\}$ is a finite
set of \emph{vertices} (representing the players), and $\mathcal{E}$ is a set of ordered vertex pairs
(\emph{edges}), where the first element is called the predecessor, and the second is called the successor.
Each edge $(i,k)$ has an associated two-player normal form game,
where only the successor $k$ is assigned payoffs. These are represented
by a matrix $A^{i,k}$ with rows enumerating the strategies of player $k$, and
columns enumerating the strategies of player $i$.
For a given strategy profile $ s=\{s_i\}_{i} \in \prod_{i} \mathcal{A}_i$,
the payoffs for player $k$ in the full game are then determined 
as the sum 
\begin{equation}
  u_k(s) = \sum_{i: (i,k) \in E} A^{i,k}(s_i,s_k).
\end{equation}
The payoffs can be extended to mixed strategies in a standard multilinear fashion:
\begin{equation}
  u_k(x) = \sum_{i: (i,k) \in E} \sum_{x_{s_i}, x_{s_k}} A^{i,k}(s_i,s_k)x_{s_i} x_{s_k}.
\end{equation}

A situation where both the successor $k$ and  the predecessor $i$
obtain a reward can be modeled by including both edges $(i,k)$ and $(k,i)$ in the graph.

We say that a simple directed graph is weakly connected 
if any two vertices can be connected by a set of edges, where the direction of the edges
is not considered.
This is a weaker condition than strong connectedness,
where each pair of vertices must be connected by a \emph{path}
(i.e., a
sequence of edges together with associated vertices, where the successor in one edge is the predecessor in the next).
The \emph{indegree} of a vertex
is the number of edges for which the vertex is the successor (i.e.,
the number of  predecessors). The \emph{outdegree} is the number of edges for which the vertex is the predecessor (i.e. the number of  successors).
A \emph{cycle} is a path where the predecessor
in the first edge is the successor in the last edge.
For our exposition we  identify cycles modulo shifts, i.e., if two paths consist of the same edges in shifted order, then they form the same cycle. 
In this paper we consider two types of weakly connected GPGs: 

\begin{enumerate}
\item First, \emph{cyclic} games, where the interaction between the players forms a cycle, where each player interacts only with the previous neighbor. We observe that in such a cyclic game the indegree and outdegree of each vertex is one. For simplicity, we  label the nodes of such $N$-player games by natural numbers $i=0,1,\dots,N$
and use the convention that node $i$ is the successor to node $i-1$, and that node $0$ is identified with node $N$.
\item Second, a more general class of graphical games, where each player's payoffs depend on at most one other player (i.e., the indegree of each vertex is at most one). For a vertex $i \in \mathcal{V}$, we   denote the predecessor vertex by $\hat{i}$, if it exists. For cyclic games we have $\hat{i}=i-1$.
\end{enumerate}

Below, we state and prove a simple lemma that characterizes the one-predecessor assumption in terms of graph topology and clarifies the relation between cyclic and indegree-one graphs (cf. Figure~\ref{graphfigure}).

\begin{lemma}\label{graphlemma}
Let $(\mathcal{V},\mathcal{E})$ be a weakly connected, simple, directed graph.
If the indegree of each vertex is at most one,
then the graph can have at most one cycle. 
If the graph has no cycle, then it has  at most one root vertex (i.e., a vertex of indegree zero), such that all other vertices are connected to it by a unique directed path.
\end{lemma}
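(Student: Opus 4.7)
The plan is to establish both claims via the same core observation: any vertex lying on a directed cycle has its indegree budget of one already consumed by the preceding cycle-edge, so no edge outside the cycle can point into that vertex.

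For the first claim (at most one directed cycle), suppose for contradiction that two distinct directed cycles $C$ and $C'$ exist. If they share a vertex $v$, then $v$ has a unique predecessor which must simultaneously be its predecessor in $C$ and in $C'$; iterating the unique-predecessor function backwards forces $C=C'$, a contradiction. The vertex-disjoint case is the main obstacle. There I would invoke weak connectivity to fix a shortest undirected walk $v_0 \in C,\ v_1,\ldots,v_{k-1} \notin C \cup C',\ v_k \in C'$. Since the indegrees of $v_0$ and $v_k$ are already saturated within their respective cycles, the incident walk-edges must be directed outward: $v_0 \to v_1$ and $v_k \to v_{k-1}$. Starting from $v_0 \to v_1$ and propagating forward, at each interior $v_j$ the incoming edge from $v_{j-1}$ saturates its indegree, forcing the next edge to point $v_j \to v_{j+1}$. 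By induction the whole walk becomes $v_0 \to v_1 \to \cdots \to v_k$, which makes the last edge incoming to $v_k$, contradicting the outward orientation already forced at $v_k$.

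For the second claim, under the no-cycle hypothesis I would first upgrade to the underlying undirected graph being acyclic. Any undirected cycle $v_1 - v_2 - \cdots - v_k - v_1$ would, by the indegree constraint at each vertex, admit at most one incoming incident cycle-edge; summing gives at most $k$ incoming cycle-edges total, but the cycle has exactly $k$ edges, so each vertex receives exactly one incoming cycle-edge, making the cycle directed and contradicting the hypothesis. Weak connectivity then forces the underlying graph to be a tree on $|V|-1$ edges, so indegrees sum to $|V|-1$; combined with indegree $\leq 1$, exactly one vertex has indegree zero (the root) and every other vertex has indegree exactly one.

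Finally, I would obtain the unique directed path from the root to any vertex $v$ by iterating the well-defined predecessor function $p$. The sequence $v, p(v), p^2(v),\ldots$ cannot revisit a vertex (no cycles) and cannot halt at a non-root (every non-root has a predecessor), so it terminates at the root; reversing gives a directed path from the root to $v$. Uniqueness follows by induction on path length, since any directed root-to-$v$ path must end with the edge $p(v) \to v$ by uniqueness of the predecessor, and then the remaining prefix is a root-to-$p(v)$ path. The main difficulty throughout is the disjoint-cycles subcase in the first claim, where the forced propagation along the connecting walk is the delicate step.
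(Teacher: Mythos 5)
Your proof is correct. For the single-cycle claim your argument is essentially the paper's: the paper likewise observes that the undirected connection between two hypothetical cycles is forced, edge by edge, into a directed path by the indegree constraint, and then derives the contradiction at the first vertex where that path meets the second cycle (your $v_k$), which would acquire two predecessors; your version spells out the orientation propagation explicitly and also treats the shared-vertex subcase separately, which the paper glosses over. For the root claim the routes diverge somewhat. The paper reuses the same propagation idea: backtracking through predecessors must terminate at a root by finiteness and acyclicity, and two distinct roots would force a vertex with two predecessors somewhere on the undirected connection between them. You instead first show the underlying undirected graph is a tree, via the counting observation that an undirected cycle of length $k$ contributes exactly $k$ edge-heads to $k$ vertices each of indegree at most one, hence would have to be a directed cycle; you then sum indegrees over the $|V|-1$ tree edges to conclude there is exactly one vertex of indegree zero. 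Your counting argument buys a cleaner global picture (the graph is literally an in-tree rooted at the unique indegree-zero vertex) at the cost of one extra step, whereas the paper's argument is more local and uniform with the first part. Both are valid, and your treatment of path uniqueness via the predecessor function matches the paper's.
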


\begin{proof}

For the first part of the lemma, we assume the contrary: that $a_1$, $a_2$ are nodes of two distinct cycles within the same weakly connected component.
The edges between $a_1$ and $a_2$ must form a path
(otherwise there would be a vertex with two predecessors).
Assume the path leads from $a_1$ to $a_2$
and let $a_0$ be the first vertex which is both on the path and on the cycle of $a_2$.
Then $a_0$ has two predecessors, which leads to a contradiction.

For the second part of the lemma we argue as follows. If any vertex  has a sequence of predecessors that does not form a cycle, and does not have a root node, then by backtracking through the predecessors we could identify an infinite collection of distinct vertices. Therefore, there must be at least one root node for each vertex. The path from such a root node to the given vertex must be unique, otherwise one could identify a vertex along the path with two predecessors. Finally, it is impossible to have two distinct root nodes, as  connectedness imposes that there would have to exist a node with two predecessors between them. 
\end{proof}

\begin{remark}\label{graphremark}
Under the assumptions of Lemma~\ref{graphlemma}, if the graph has a cycle,
then the cycle enjoys  properties similar to those of a root node:
 no paths go from outside the cycle to the cycle (otherwise one vertex in the cycle would have two predecessors),
and all vertices outside  the cycle must be connected by a path from one of the vertices of the cycle
(a unique path, up to the starting point within the cycle).
Later, we shall refer to such cycle as the \emph{root cycle}.
\end{remark}

\begin{figure}
    \includegraphics[width=70mm,trim=0 200 0 200,clip]{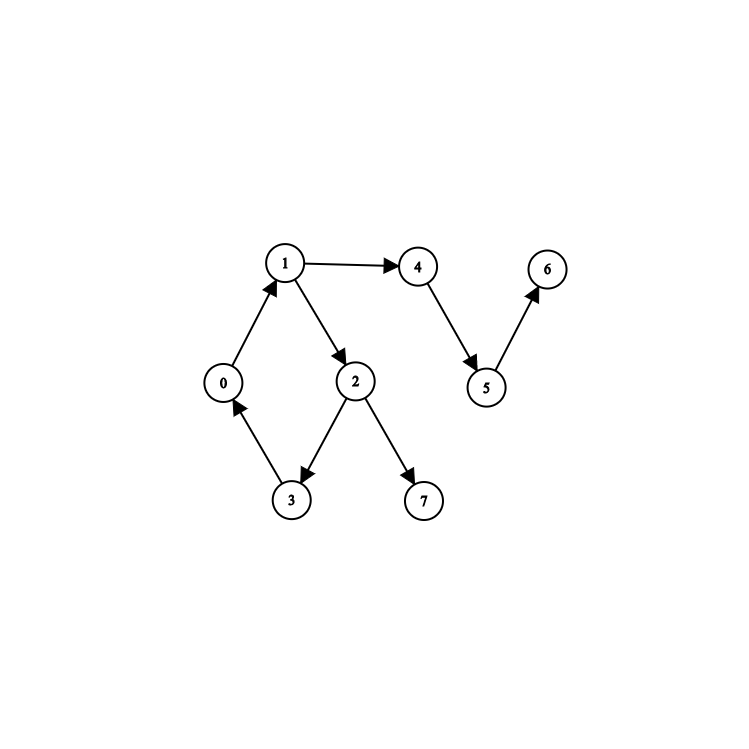}
    \caption{A weakly connected graph where each vertex is at most of indegree  one.}
    \label{graphfigure}
\end{figure}

\subsection{Follow-the-regularized-leader equations}

Denote by $v_{i\alpha_i}(x):=u_i(\alpha_i;x_{-i})$ and $v_i(x)=(v_{i\alpha_i}(x))_{\alpha_i \in \mathcal{A}_i}$. To model the dynamics of learning we use a class of learning systems
known as \emph{follow-the-regularized-leader} systems (FoReL)~\cite{Cesa,Shalev}.
This class encompasses a variety of models ranging from gradient to replicator dynamics, and allows for natural description of learning
as regularized maximization of individual payoffs.

FoReL dynamics for player $i$ are defined by evolution of 
\emph{utilities} $y_i = \{y_{i\alpha_i}\}_{\alpha_i \in \mathcal{A}_i} \in \mathbb{R}^{|\mathcal{A}_i|}$ -- that is real numbers representing a score each player assigns to each respective strategy -- by the integral equation
\begin{equation}\label{forel}
\begin{aligned}
  y_i (t) &= y_i(0) + \int_0^t v_i(x(s)) ds,\\
  x_i(t)&=Q_i(y_i(t)),
\end{aligned}
\end{equation}
where the \emph{choice map} $Q=(Q_1,\dots,Q_N)$, $Q_i: \mathbb{R}^{|\mathcal{A}_i|} \to \mathcal{X}_i$, which determines the evaluated strategy profile $x(t)$ is
given on each coordinate by:
\begin{equation}
  Q_i(y_i) = \operatorname{argmax}_{x_i \in \mathcal{X}_i} \{ \langle y_i, x_i \rangle - h_i(x_i) \> \}.
\end{equation}
In the above $h_i: \mathcal{X}_i \to \mathbb{R} \cup \{ -\infty, \infty \}$ 
is a convex regularizer function, representing a regularization/exploration term.
The equation~\eqref{forel} represents how players adapt their mixed strategies to changing 
utility values.
Observe, that without the regularization term, the map $Q_i$ would simply put all weight on the strategy with the highest utility.

In binary games, each player has only two strategies at his disposal, say $\alpha_0,\alpha_1$.
The variable $x_i$ denotes then the proportion of time player $i$ plays strategy $\alpha_0$,
and the proportion of $\alpha_1$ is given by $1-x_i$. 
Following~\cite{Mertikopoulos}, we introduce new variables 
$z_{i} := y_{i\alpha_0} - y_{i\alpha_1} \in \mathbb{R}$,
representing the difference in utilities between playing strategy $\alpha_0$ and $\alpha_1$. 
It is intuitively clear, and it was proved formally e.g. in~\cite{Mertikopoulos} that $Q_i(z_i+c,c)$ is constant in $c$, and
therefore, without loss of generality, we can set $c:=0$,
and restrict our considerations to a $z$-dependent choice map $\hat{Q}_i(z_i):=Q_i(z_i,0)$.
Provided that $Q$ is sufficiently regular (e.g. continuous), the integral equation~\eqref{forel} can be converted to a system of differential equations
\begin{equation}\label{deq}
\dot{z}=V(z),
\end{equation}
given coordinate-wise by
\begin{equation}\label{vf}
  V_{i}(z):= v_{i\alpha_0}(\hat{Q}(z))-v_{i\alpha_1}(\hat{Q}(z)),
\end{equation}
for details again see~\cite{Mertikopoulos}.

\begin{remark}\label{auton}
An intuitively obvious, but technically important observation
is that evolution of $i$th coordinates of the system~\eqref{forel}, and, in turn~\eqref{vf}
depends solely on the values of $x_j$ or $z_j$, respectively, 
for nodes $j$ that influence the payoffs of $i$.
In particular, for GPGs we have $\partial V_i / \partial z_j \neq 0$
implies that there is an edge from $j$ to $i$ in the game graph;
and for GPGs with up to one predecessor, without loss of generality we can
rewrite~\eqref{deq} as
\begin{equation}
    \dot{z}_i = V_i(z_{\hat{i}}) =v_{i\alpha_0}(\hat{Q}_{\hat{i}} (z_{\hat{i}}))-v_{i\alpha_1}(\hat{Q}_{\hat{i}}(z_{\hat{i}})).
\end{equation}
\end{remark}

As previously hinted, for equation~\eqref{vf} to be well-posed,
we need to enforce certain conditions on the regularizer.
The following lemma determines desirable properties of monotonicity and smoothness of the choice map, when a player has exactly two strategies at disposal (so $\mathcal{X}_i=[0,1]$).

\begin{lemma}\label{induclem}
  Assume that the regularizer $h_i$ satisfies the following conditions:
  \begin{enumerate}
    \item $h_i \in C^2( (0,1)) \cap C^0([0,1])$ (\emph{smoothness}),
    \item $h_i'(x_i) \to -\infty$ as $x_i \to 0$ and $h_i'(x_i) \to \infty$ as $x_i \to 1$ (\emph{steepness}),
    \item $h_i''(x_i) > 0$ for $x \in (0,1)$ (\emph{strict convexivity}).
  \end{enumerate}
  Then $\hat{Q}_i \in C^1(\mathbb{R})$ and $\hat{Q}_i'(z_i)>0$.
\end{lemma}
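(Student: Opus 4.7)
The plan is to reduce the argmax definition of $Q_i$ to a one-variable concave maximization, solve its first-order condition explicitly, and then apply the inverse function theorem. Using the identification $\mathcal{X}_i = [0,1]$ (parameterized by the probability of, say, strategy $\alpha_1$), the relation $\langle y_i, x_i\rangle = y_{i\alpha_0} + z_i x$ together with translation invariance in $c$ reduces the problem to
\begin{equation*}
Q_i(z) = \operatorname{argmax}_{x \in [0,1]} \bigl( z\, x - h_i(x) \bigr).
\end{equation*}

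Next I would show the argmax exists, is unique, and lies in the interior $(0,1)$. The objective is continuous on $[0,1]$ by smoothness condition (1), so a maximizer exists; strict concavity of $x\mapsto zx - h_i(x)$ (from strict convexity of $h_i$) makes it unique. To rule out boundary maximizers, I would use steepness: the right-derivative of the objective at $0$ is $z - h_i'(0^+) = +\infty$ and its left-derivative at $1$ is $z - h_i'(1^-) = -\infty$, so the objective is strictly increasing at $0$ and strictly decreasing at $1$. Hence the maximizer lies in $(0,1)$ and is characterized by the first-order condition $h_i'(Q_i(z)) = z$.

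Now I would invert this relation. By condition (3), $h_i''>0$ on $(0,1)$, so $h_i': (0,1)\to \mathbb{R}$ is strictly increasing; combined with steepness condition (2) and continuity of $h_i'$ (from $h_i\in C^2((0,1))$), $h_i'$ is a strictly increasing $C^1$ bijection from $(0,1)$ onto $\mathbb{R}$. Therefore $Q_i = (h_i')^{-1}$ globally on $\mathbb{R}$. The inverse function theorem applies since $(h_i')'=h_i''>0$ everywhere, and yields
\begin{equation*}
Q_i \in C^1(\mathbb{R}), \qquad Q_i'(z) = \frac{1}{h_i''(Q_i(z))} > 0,
\end{equation*}
giving both conclusions.

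This argument is essentially routine convex analysis and I do not expect serious obstacles; the only subtle point is justifying that the maximum is attained in the interior $(0,1)$ rather than at the endpoints, which is precisely what the steepness hypothesis (2) buys us. Without steepness one would have to handle boundary cases where $Q_i$ could stick at $0$ or $1$ for a range of $z$, breaking both $C^1$-regularity and strict monotonicity.
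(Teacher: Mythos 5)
Your proof is correct and follows essentially the same route as the paper: reduce to the scalar first-order condition $h_i'(x_i)=z_i$ and invert it via the inverse function theorem, obtaining $Q_i'(z)=1/h_i''(Q_i(z))>0$. Your justification that the maximizer is interior (the one-sided derivatives of $x\mapsto zx-h_i(x)$ blow up with the right signs at the endpoints) is in fact cleaner than the paper's, which asserts $h_i(0)=h_i(1)=\infty$ --- a claim that sits awkwardly with the stated assumption $h_i\in C^0([0,1])$.
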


\begin{proof}
For a given $z_i$, 
$\hat{Q}_i(z_i)$ is defined as the maximizer of $\langle (z_i,0), (x_i, 1-x_i) \rangle - h_i(x_i)$
over $x_i \in [0,1]$. We have
\begin{equation}
\langle (z_i,0), (x_i, 1-x_i) \rangle - h_i(x_i) = z_i x_i - h_i(x_i).
\end{equation}
From steepness, continuity and strict convexity it follows that $h_i(0)=h_i(1)=\infty$
so the maximum cannot be attained there.
A necessary condition for maximum to be attained in $(0,1)$
is
\begin{equation}\label{nec}
 z_i=h_i'(x_i).
\end{equation}

From steepness and strict convexivity 
it follows that equation~\eqref{nec} has a unique solution $x_i=:\hat{Q}_i(z_i)$
for any $z_i \in \mathbb{R}$.
From the inverse function theorem we have
\begin{equation}
  \frac{ \partial x_{i} }{\partial z_i } = \hat{Q}_i'(z_i) = 1/h_i''(x_i) > 0,
\end{equation}
which also implies that $\hat{Q}_i$ is $C^1$.
\end{proof}

Perhaps the best known example of a FoReL
learning system are the \emph{replicator equations}~\cite{taylor}, where the regularizer is given by
\begin{equation}
    h_i(x_i):=\sum_{\alpha_i} x_{i\alpha_i}\log{x_{i\alpha_i}}.
\end{equation}
In particular, such regularizer satisfies the assumptions of Lemma~\ref{induclem}, and yields the following equations for a binary GPG with up to one predecessor:
\begin{align}\label{replicator}
  \begin{aligned}
  &\dot{z}_i=
  \sum_{j,k\in\{0,1\}} (-1)^{(j+k)}A^{\hat{i},i}(\alpha_j, \alpha_k)\frac{\exp(z_{\hat{i}})}{1+\exp(z_{\hat{i}})}
  \\
  &- A^{\hat{i},i}(\alpha_1,\alpha_1) + A^{\hat{i},i}(\alpha_1,\alpha_0),\quad i = 1,\dots , N
\end{aligned}
\end{align}
which translates to the following system in original ($x$) coordinates:
\begin{align}\label{replicatorX}
  \begin{aligned}
  &\dot{x}_i=x_i(1-x_i)\sum_{j,k\in\{0,1\}} (-1)^{(j+k)}A^{\hat{i},i}(\alpha_j, \alpha_k)x_{\hat{i}}\\ &-x_i(1-x_i)\left( A^{\hat{i},i}(\alpha_1,\alpha_1) - A^{\hat{i},i}(\alpha_1,\alpha_0)\right),\quad i=1,\dots ,N.
  \end{aligned}
  \end{align}
%

\subsection{Limit sets, periodic orbits and chaos}

A differential equation $\dot{x}=F(x)$ given by a $C^1$ vector field 
$F: \Omega \to \mathbb{R}^n$ on a domain $\Omega \subset \mathbb{R}^n$
admits a unique solution on a maximal open interval $I=(I_l,I_r),\ I_l,I_r \in \mathbb{R} \cup \{\pm \infty\}$, denoted by 
$x(t): I \to \mathbb{R}^n$, 
for any initial condition $x(0) = x_0 \in \Omega$.
Among possible solutions to such equation, we distinguish particular types of solutions
defined by their qualitative properties:
we say that a solution $x(t)$ is an \emph{equilibrium} iff $x(t)=\operatorname{const}$ for all $t \in I$.
A solution is \emph{periodic} iff $x(t)=x(t+T)$ for some $T>0$ and all $t \in I$;
and it is a \emph{connecting orbit} between equilibria $x_1$ and $x_2$ (allowing $x_1=x_2$),
iff $x(t) \to x_1$ as $t \to \infty$ and $x(t) \to x_2$ as $t \to -\infty$.
A set $\omega(x_0) \subset \Omega$ is a limit set for an initial condition $x_0 \in \Omega$,
if $\forall x \in \omega(x_0)$ there exists an unbounded, increasing sequence $\{t_n\}_n \subset \mathbb{R}^+$,
such that $x(t_n) \to x, \ n \to \infty$.
Limit sets are \emph{invariant} -- they are formed by unions of solutions 
of the differential equation on maximal intervals. They are also \emph{compact} -- bounded as subsets of $\mathbb{R}^n$,
and closed under the limit operation on sequences from itself.

Fundamental research has been devoted
to study the properties of solutions within limit sets,
as they offer a qualitative description of long-term behavior of the system~\cite{hale}.
Since the discovery of chaotic attractors~\cite{lorenz}, it has become known
that in the general setting, these solutions can have arbitrarily complicated shapes and exhibit
seemingly random behavior,
a clearly undesirable feature from the point of view of applications;
and engineering systems with simple $\omega$-limit sets became of particular interest.
\begin{defn}
  We say that a differential equation $\dot{x}=F(x),\ x \in \Omega$ has the Poincar\'e-Bendixson property
  iff for all $x \in \Omega$, such that the solution $x(t)$ is bounded, each limit set $\omega(x)$ such that $\omega(x)\subset \Omega$ is either:
  \begin{itemize}
    \item an equilibrium;
    \item a periodic solution;
    \item a union of equilibria and connecting orbits between these equilibria.
  \end{itemize}
\end{defn}

A well known result from the qualitative theory of differential equations
shows that planar systems exhibit this trait.
\begin{theorem}\label{thm:pb}  The Poincar\'e-Bendixson Theorem~\cite{Bendixson}.
  Let $F=F(x)$, $x \in \Omega \subset \mathbb{R}^2$ be a $C^1$ vector field with finitely many zeroes. 
  Then, the differential equation $\dot{x}=F(x)$ has the Poincar\'e-Bendixson property.
\end{theorem}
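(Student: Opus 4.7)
The plan is to follow the classical topological argument, which crucially exploits the fact that $\mathbb{R}^2$ is 2-dimensional via the Jordan Curve Theorem. First I would introduce the notion of a \emph{transversal section}: a compact line segment $\Sigma \subset \Omega$ such that $F(x)$ is nowhere tangent to $\Sigma$. Near every regular point $p$ (i.e.\ $F(p) \neq 0$) the flow-box theorem provides such a $\Sigma$ and locally straightens the flow. Since bounded trajectories in $\mathbb{R}^n$ have non-empty, compact, invariant, and connected $\omega$-limit sets, I would reduce the theorem to understanding limit sets built from equilibria and regular orbits.

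The central technical tool I would establish is the \emph{monotonicity lemma}: if an orbit $x(t)$ meets a transversal $\Sigma$ at times $t_1 < t_2 < t_3$, then the points $x(t_1),x(t_2),x(t_3)$ appear in monotone order along $\Sigma$. Here is where planarity enters: the trajectory arc from $x(t_1)$ to $x(t_2)$ together with the subsegment of $\Sigma$ between them forms a simple closed curve $\gamma$, which by the Jordan Curve Theorem divides $\Omega$ into two components. Because $F$ crosses $\Sigma$ in a consistent transverse direction, once the forward orbit has entered one component at $x(t_2)^+$ it cannot re-enter the other without re-crossing $\gamma$, and this forces $x(t_3)$ to lie on the same side of $x(t_2)$ on $\Sigma$ as $x(t_2)$ lies relative to $x(t_1)$. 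The immediate and crucial corollary is that any $\omega$-limit set meets any transversal in at most one point.

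With monotonicity in hand, I would carry out the case analysis on $\omega(x_0)$. If $\omega(x_0)$ contains only equilibria, then connectedness of $\omega(x_0)$ together with the finiteness of the zero set of $F$ forces $\omega(x_0)$ to be a single equilibrium. Otherwise pick a regular $p \in \omega(x_0)$ and let $\sigma$ denote its full orbit, which lies in $\omega(x_0)$ by invariance. Fixing a transversal $\Sigma_p$ through $p$, the orbit $x(t)$ must intersect $\Sigma_p$ infinitely often and accumulate at $p$; since $\omega(x_0) \cap \Sigma_p = \{p\}$, the orbit $\sigma$ must also return to $p$, hence $\sigma$ is periodic, and a further application of the ``one point per transversal'' property shows $\omega(x_0)$ coincides with this periodic orbit. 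Finally, if $\omega(x_0)$ contains a regular point $p$ whose orbit is not periodic, the same one-point property forces both $\omega(\sigma)$ and $\alpha(\sigma)$ to consist solely of equilibria (any regular accumulation point would give a second intersection with its own transversal), so $\sigma$ is a connecting orbit between equilibria in $\omega(x_0)$.

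The main obstacle is the monotonicity lemma and its careful invocation of the Jordan Curve Theorem; this is the only step where planarity is essential, and it is what fails in higher dimensions (allowing the strange attractors discussed in the introduction). A secondary subtlety is verifying that in the third case the $\alpha$- and $\omega$-limits of non-periodic regular orbits in $\omega(x_0)$ truly reduce to equilibria; this relies both on the one-transversal-point property and on finiteness of the zero set of $F$, and is what pins down the ``union of equilibria and connecting orbits'' clause in the Poincar\'e-Bendixson property.
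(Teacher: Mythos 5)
The paper does not prove this statement: it is the classical Poincar\'e--Bendixson theorem, quoted with a citation to Bendixson and used as a black box, so there is no proof of record to compare yours against. Your outline is the standard textbook argument (flow boxes, transversal sections, the monotonicity of successive crossings via the Jordan Curve Theorem, the ``at most one point of a transversal lies in $\omega(x_0)$'' corollary, then the trichotomy), and as a sketch it is essentially correct. One step is stated too loosely: in your second case, the fact that $x(t)$ accumulates on $p$ through $\Sigma_p$ and that $\omega(x_0)\cap\Sigma_p=\{p\}$ does not by itself force the orbit $\sigma$ of $p$ to return to $\Sigma_p$; the clean argument is to take a regular point $q\in\omega(\sigma)\subset\omega(x_0)$ (if one exists), note that $\sigma$ then crosses a transversal at $q$ infinitely often, and conclude from the one-point property that all those crossings coincide, so $\sigma$ is periodic --- otherwise $\omega(\sigma)$ and $\alpha(\sigma)$ contain only equilibria, which is exactly your third case. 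With that repair the proposal is the standard, correct proof.
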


Already in $\mathbb{R}^3$ there are known examples of systems having complicated, chaotic attractors~\cite{lorenz}. 
However, dimensionality is not the only factor which could determine potential shapes of limit sets.
In particular, for certain systems of arbitrary dimension, with structured
``previous-neighbor'' interactions between the variables, the limit sets can be as as simple
as in planar systems.

\begin{theorem} Mallet-Paret \& Smith\label{paretsmith}~\cite{ParetSmith}.
 Let $x=(x_1, \dots, x_n)$, $(f_i(x_{i-1},x_i))_{i=1}^n$, 
 be a $C^1$ vector field on an open, convex set $O \subset \mathbb{R}^n$, and let $x_0:=x_n$.
 Assume that $\frac{\partial f_{i}}{\partial x_{i-1}} \neq 0$ for all $x \in O$. Then, the system of differential equations
 \begin{equation}
   \dot{x}_i = f_i(x_{i-1},x_i), \ i=1,\dots ,n,\  x \in O,
 \end{equation}
 has the Poincar\'e-Bendixson property.
\end{theorem}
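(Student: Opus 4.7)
The plan is to adapt the classical discrete-Lyapunov-function machinery for monotone cyclic feedback systems. After a coordinate sign-flip $x_i \mapsto \epsilon_i x_i$ one may assume that all $\delta^i$ are positive except possibly $\delta^1$; the residual sign $\sigma := \tfrac{1}{2}(1-\prod_i \delta^i) \in \{0,1\}$ is an invariant of the loop and distinguishes the ``positive'' from the ``negative'' feedback case. In either case the linearized variational equation along any trajectory has the same cyclic tridiagonal sign structure, and one associates to a vector $y \in \mathbb{R}^n \setminus \{0\}$ the integer $N(y)$ counting sign changes in the cyclic sequence $(y_1,\dots,y_n,(-1)^\sigma y_1)$, taking the lower-semicontinuous envelope at points with zero components so that $N$ drops at degeneracies.

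The crucial analytic step is to prove that for any two distinct solutions $x(t), \tilde{x}(t)$ of $\dot{x}=F(x)$ the integer $N(x(t)-\tilde{x}(t))$ is non-increasing in $t$ and strictly decreases whenever the difference $z=x-\tilde{x}$ has an interior zero coordinate $z_i=0$ with $z_{i-1}\neq 0$. To see this one linearizes: $\dot{z}_i = a_i(t) z_{i-1} + b_i(t) z_i$ with $\delta^i a_i(t) > 0$, and checks by a direct case analysis that as a zero crossing propagates around the cycle, a sign change can only be annihilated, never created. This is the classical Smillie/Mallet-Paret zero-crossing argument, and it is where the sign hypothesis $\delta^i \partial f_i / \partial x_{i-1} > 0$ is used in an essential way; my expectation is that this is the main technical obstacle, since the argument has to be made rigorous uniformly in the configuration of simultaneous zeros.

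Given this monotonicity, the proof of the Poincar\'e--Bendixson conclusion proceeds as follows. On any bounded trajectory the map $t \mapsto N(x(t)-\tilde{x}(t))$ is eventually constant, hence $N$ takes a constant value $N_\infty$ on pairs of points in the $\omega$-limit set $\omega(x_0)$; this constancy rigidly orders the points in $\omega(x_0)$ and is the substitute for the Jordan curve theorem in the planar argument. One then picks a hyperplane $\Sigma$ transverse to the cyclic feedback direction and shows, using the sign-change constraint, that the first return map $P: \Sigma \cap \omega(x_0) \to \Sigma \cap \omega(x_0)$ is monotone in a one-dimensional order induced by $N_\infty$. A monotone return map on a compact invariant set yields only fixed points, periodic points, and heteroclinic connections between fixed points, and lifting this back to the flow produces exactly the three possibilities in the definition of the Poincar\'e--Bendixson property, completing the proof.
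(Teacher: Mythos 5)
You should first note that the paper does not prove this statement at all: Theorem~\ref{paretsmith} is imported verbatim from Mallet-Paret and Smith \cite{ParetSmith} and used as a black box in the proofs of Theorems~\ref{thm:main} and~\ref{thm:main2}. There is therefore no in-paper proof to compare yours against; the only meaningful benchmark is the original argument in the cited work. Measured against that, your outline correctly identifies the architecture of the known proof: normalizing the signs $\delta^i$ by coordinate flips, the residual loop sign $\prod_i \delta^i$ separating positive from negative feedback, the integer-valued discrete Lyapunov function $N$ counting cyclic sign changes, and the key lemma that $N(x(t)-\tilde{x}(t))$ is non-increasing and drops at degenerate configurations. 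These are exactly the right ingredients.

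As a proof, however, the proposal has gaps at both load-bearing steps. The zero-crossing monotonicity lemma is asserted via ``a direct case analysis,'' and you yourself flag it as the main technical obstacle; in \cite{ParetSmith} this is a substantial piece of analysis (blocks of simultaneous zero coordinates must be handled, and the strict drop of $N$ requires an argument about the linearized flow near degeneracies, not a pointwise sign check), so deferring it leaves the core of the theorem unproved. More importantly, your concluding step deviates from the published argument in a way that does not obviously work: Mallet-Paret and Smith do not use a monotone return map on a transversal, but rather use the eventual constancy of $N$ on differences of points of $\omega(x_0)$ to show that the projection $x \mapsto (x^1,x^n)$ onto a coordinate two-plane is \emph{injective} on $\omega(x_0)$, which embeds the limit dynamics into a planar flow where the classical dichotomy applies. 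Your alternative requires constructing the order ``induced by $N_\infty$,'' proving the flow is transverse to $\Sigma$ on the limit set (which fails at equilibria, and equilibria can lie in $\omega(x_0)$ --- indeed the third alternative of the Poincar\'e--Bendixson property is precisely about them), and justifying that a monotone return map on a compact invariant set yields cycles of connecting orbits as the only remaining possibility. None of this affects the paper, which is entitled to cite the theorem, but as a standalone proof the proposal is an accurate roadmap rather than an argument.
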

The above theorem is key to proving our further results.

\section{The Poincar\'e-Bendixson theorem for games}\label{sec:pb}

In this section we state and prove our main results on the topology of limit sets in Follow-the-regularized-Leader learning.
We will first state and prove the Poincar\'e-Bendixson theorem for cyclic games:

\begin{theorem}\label{thm:main}
  Let $\dot{z}=V(z)$ be a system of differential equations given by the vector field~\eqref{vf} -- the 
  follow-the-regularized-leader learning dynamics -- for a binary, cyclic game.
  For any smooth, steep, strictly convex collection of regularizers $\{h_i\}_i$  such system
  possesses the Poincar\'e-Bendixson property.
\end{theorem}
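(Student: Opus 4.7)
The plan is to fit the FoReL dynamics of a binary cyclic game into the hypotheses of Theorem~\ref{paretsmith} (Mallet-Paret and Smith), so that the Poincar\'e-Bendixson property follows immediately. For a cyclic GPG, Remark~\ref{auton} already provides the required feedback structure: $\dot{z}_i = V_i(z_{\hat{i}})$, with $\hat{i}$ equal to $i-1$ taken modulo $N$. Each coordinate therefore evolves as a function of its unique predecessor (and trivially of itself), matching the form $\dot{x}_i = f_i(x_{i-1}, x_i)$ required by Theorem~\ref{paretsmith}.

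The central computation I would do is to expand $V_i(z_{\hat{i}})$ and read off the sign of $\partial V_i/\partial z_{\hat{i}}$. Since each $v_{i\alpha_k}(x)$ is affine in $x_{\hat{i}}$, direct substitution into~\eqref{vf} gives
\[
V_i(z_{\hat{i}}) = C_i \, \hat{Q}_{\hat{i}}(z_{\hat{i}}) + D_i,
\]
where
\[
C_i = A^{\hat{i},i}(\alpha_1,\alpha_0) - A^{\hat{i},i}(\alpha_0,\alpha_0) - A^{\hat{i},i}(\alpha_1,\alpha_1) + A^{\hat{i},i}(\alpha_0,\alpha_1)
\]
and $D_i$ is a constant in the entries of $A^{\hat{i},i}$. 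The chain rule then yields
\[
\frac{\partial V_i}{\partial z_{\hat{i}}} = C_i \cdot \hat{Q}_{\hat{i}}'(z_{\hat{i}}),
\]
and by Lemma~\ref{induclem} the factor $\hat{Q}_{\hat{i}}'(z_{\hat{i}})$ is strictly positive for every $z_{\hat{i}} \in \mathbb{R}$. Hence the sign of this partial derivative is constant in $z$ and equals $\operatorname{sign}(C_i)$ whenever $C_i \neq 0$. Lemma~\ref{induclem} also guarantees that $V$ is $C^1$, a further hypothesis of Theorem~\ref{paretsmith}.

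Finally, I would make the genericity statement precise: the ``bad'' payoff set $\{(A^{\hat{i},i})_i : C_i = 0 \text{ for some } i\}$ is a finite union of codimension-one affine hyperplanes in payoff space, hence of Lebesgue measure zero. Outside this exceptional set I set $\delta^i := \operatorname{sign}(C_i) \in \{-1, +1\}$, which gives $\delta^i \, \partial V_i/\partial z_{\hat{i}} > 0$ uniformly on the open convex set $O = \mathbb{R}^N$. Theorem~\ref{paretsmith} then delivers the Poincar\'e-Bendixson property at once. The only point that truly needs care, rather than a genuine obstacle, is verifying that the monotonicity condition holds \emph{uniformly} in $z$ and not merely pointwise; this is exactly what the constancy of $C_i$ together with the strict positivity of $\hat{Q}_{\hat{i}}'$ (from strict convexity of $h_{\hat{i}}$) supplies, and it is also why the three regularity assumptions on $h_i$ in Lemma~\ref{induclem} are essential in the statement.
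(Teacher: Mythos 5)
Your proposal is correct and follows essentially the same route as the paper: both reduce the cyclic FoReL system to a monotone cyclic feedback system and invoke Theorem~\ref{paretsmith}, with the sign condition $\delta^i\,\partial V_i/\partial z_{i-1}>0$ secured by the positivity of $\hat{Q}_{\hat{i}}'$ from Lemma~\ref{induclem} and the generic nonvanishing of the same payoff combination $C_i$. Your explicit identification of the exceptional set as a finite union of hyperplanes is a slightly more precise rendering of the paper's ``generically satisfied'' remark, but the argument is the same.
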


\begin{proof}

Since $u_i$ depends only on $Q_{i}$ and $Q_{i-1}$, we have
\begin{align}
\begin{aligned}
  V_{i}(\hat{Q}(z)) &= V_{i}(\hat{Q}_{i-1}(z_{i-1}))\\ 
                    &= v_{i\alpha_0}(Q_{i-1}(z_{i-1},0))-v_{i\alpha_1}(Q_{i-1}(z_{i-1},0)).
\end{aligned}
\end{align}

Our goal is to employ Theorem~\ref{paretsmith}. 
Therefore, we would like to establish under which conditions
\begin{equation}\label{ineq}
 \frac{\partial V_i}{\partial z_{i-1}} \neq 0.
\end{equation}
for all $i$.
We have:
\begin{equation}
  \frac{ \partial V_i }{ \partial z_{i-1} } = \frac{\partial v_{i\alpha_0}}{\partial x_{i-1}} \frac{\partial x_{i-1}}{\partial z_{i-1}} 
  - 
 \frac{\partial v_{i\alpha_1}}{\partial x_{i-1}} \frac{\partial x_{i-1}}{\partial z_{i-1}} .
\end{equation} 

Moreover, differentiation of mixed strategy payoffs yields
\begin{align}
    \begin{aligned}
    \frac{\partial v_{i\alpha_1}}{\partial x_{i-1}} - \frac{\partial v_{i\alpha_0}}{\partial x_{i-1}}
= A^{\hat{i},i}(\alpha_0, \alpha_{0})- A^{\hat{i},i}(\alpha_1, \alpha_{0})\\
+A^{\hat{i},i}(\alpha_1, \alpha_1)
-A^{\hat{i},i}(\alpha_0, \alpha_1).
    \end{aligned}
\end{align}

From Lemma~\ref{induclem} we have $\frac{\partial x_{i-1}}{\partial z_{i-1}} > 0$, 
so the necessary condition to satisfy inequality~\eqref{ineq} is:
\begin{equation}
\begin{aligned}
&A^{\hat{i},i}(\alpha_0, \alpha_{1}) +A^{\hat{i},i}(\alpha_1, \alpha_{0}) \\ &\neq
A^{\hat{i}, i}(\alpha_0, \alpha_{0}) +  A^{\hat{i}, i}(\alpha_1, \alpha_{1}).
\end{aligned}
\end{equation}

Now let's consider the edge case, where
$A^{\hat{i},i}(\alpha_0, \alpha_{1}) +A^{\hat{i},i}(\alpha_1, \alpha_{0})  =
A^{\hat{i}, i}(\alpha_0, \alpha_{0}) +  A^{\hat{i}, i}(\alpha_1, \alpha_{1})$ for some $i$. 
Then $ \partial v_{i\alpha_0}/\partial x_{i-1} = \partial v_{i\alpha_1}/{\partial x_{i-1}}$.
Consequently, $\partial V_i / \partial z_{i-1} =0$,
and hence $i$-th coordinate of all solutions has the form $z_i(t)=a_i t+b$, for some $a_i,b_i$.
If $a_i \neq 0$, then all solutions diverge to infinity.
If, however $a_i=0$, then $z_i(t)=const$.
Since $V_{i+1}$ depends only on $z_i$, 
and $z_{i+1} = a_{i+1}t+b_{i+1}$;
the argument continues, until all coordinates of solutions are constant, or one coordinate diverges for all solutions.
\end{proof}

We are now ready to state and prove the theorem for GPGs with nodes of indegree at most one. 

\begin{theorem}\label{thm:main2}
Let $\dot{z}=V(z)$ be a system of differential equations given by the follow-the-regularized leader dynamics of a binary, weakly connected, graphical polymatrix game, where each player has up to one predecessor.
Then, for any smooth, steep, strictly convex collection of regularizers $\{h_i\}_i$, such system possesses the Poincar\'e-Bendixson property.
\end{theorem}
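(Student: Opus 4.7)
The plan is to leverage the graph decomposition from Lemma~\ref{graphlemma} and Remark~\ref{graphremark} to reduce the argument to Theorem~\ref{thm:main} combined with repeated application of the augmentation lemma just proved. By Lemma~\ref{graphlemma} and Remark~\ref{graphremark}, the weakly connected interaction graph is either (i) a unique directed root cycle $C$ with the remaining vertices organized into directed trees hanging off $C$, or (ii) acyclic with a single indegree-zero root vertex $r$ and all other vertices joined to it by unique directed paths. By Remark~\ref{auton}, every $z_i$ evolves only through its unique predecessor coordinate $z_{\hat{i}}$, so the forest structure of the graph is mirrored exactly in the dependency structure of the vector field $V$.

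The first step is to treat the base subsystem. In case (i), the coordinates $\{z_j : j \in C\}$ form a closed autonomous subsystem coinciding with the FoReL dynamics of the binary cyclic subgame on $C$, and Theorem~\ref{thm:main} yields the Poincar\'e-Bendixson property for generic payoffs along the edges of $C$. In case (ii), the root $r$ has indegree zero, so its polymatrix payoff is an empty sum and $V_r \equiv 0$; hence $\dot{z}_r = 0$ and this one-dimensional base system has the Poincar\'e-Bendixson property trivially.

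Next, I would enumerate the non-base vertices in a topological order compatible with the tree structure (for instance by breadth-first search from the base), so that at each step the predecessor of the newly added vertex is already in the current subsystem. When vertex $v$ is attached, its equation $\dot{z}_v = V_v(z_{\hat{v}})$ depends on a single coordinate $z_{\hat{v}}$ of the current subsystem, which is exactly the augmented form $\dot{y} = g(x_i)$ handled by the preceding lemma (with $g$ smooth by Lemma~\ref{induclem} and polynomiality of the payoff expansion). Consequently each augmentation preserves the Poincar\'e-Bendixson property, and after a finite induction the full system $\dot{z} = V(z)$ inherits it.

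The main obstacle, and the only place where genericity of the payoffs is invoked, is the base step in case (i), where Theorem~\ref{thm:main} requires the condition $A^{\hat{i},i}(\alpha_0,\alpha_1)+A^{\hat{i},i}(\alpha_1,\alpha_0)\neq A^{\hat{i},i}(\alpha_0,\alpha_0)+A^{\hat{i},i}(\alpha_1,\alpha_1)$ along each edge of the root cycle. The union of these finitely many codimension-one exceptional loci carves out a set of measure zero in payoff space, matching the statement of the theorem. The subsequent tree augmentations impose no further constraints, since the augmentation lemma is unconditional on the payoffs of the tree edges.
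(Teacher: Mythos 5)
Your proposal is correct and follows essentially the same route as the paper: identify the root vertex or root cycle via Lemma~\ref{graphlemma} and Remark~\ref{graphremark}, establish the Poincar\'e--Bendixson property for the base subsystem (trivially for a root vertex, via Theorem~\ref{thm:main} for the root cycle), and then attach the remaining vertices one at a time in topological order using the augmentation lemma, with genericity needed only along the root cycle. The only cosmetic difference is that you take $V_r\equiv 0$ at an indegree-zero root (an empty polymatrix sum) whereas the paper allows a nonzero constant drift; both cases are trivially Poincar\'e--Bendixson, so nothing is affected.
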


First, we state the following lemma on inheritance of the Poincar\'e Bendixson property for augmented systems.

\begin{lemma}\label{auglemma}
Consider the following $y$-augmented system of differential equations 
\begin{equation}\label{aug}
\begin{aligned}
\dot{x}&=f(x),\\
\dot{y}&=g(x_i),\\
x&=\{x_1,\dots,x_n\} \in \mathbb{R}^n, \ y \in \mathbb{R}.
\end{aligned}
\end{equation}
for smooth $f$, $g$.
If the original system 
\begin{equation}\label{orig}
    \dot{x}=f(x)
\end{equation} 
has the Poincar\'e-Bendixson property, then the augmented system~\eqref{aug} also has the Poincar\'e-Bendixson property.
\end{lemma}

\begin{proof}
Let $Z$ be an $\omega$-limit set corresponding to some solution $(x(t),y(t))$ to the system~\eqref{aug}.
Consider $X$ -- an $\omega$-limit set to solution $x(t)$ of~\eqref{orig}.

From invariance of $\omega$-limit sets it follows set $Z$ consists of a union of solutions of~\eqref{aug}.
For any solution $\{x^*(t), y^*(t): \ t \in \mathbb{R}\} \subset Z$, we have $\{x^*(t)\} \subset X$.
By the Poincar\'e-Bendixson property of the original system,
we can distinguish three cases:
\begin{enumerate}
    \item $x^*(t)$ is an equilibrium of~\eqref{orig}, \label{en1} 
    \item $x^*(t)$ is a periodic orbit of~\eqref{orig}, \label{en2}
    \item $x^*(t)$ is a connecting orbit of~\eqref{orig} -- a part of a cycle of connecting orbits. \label{en3}
\end{enumerate}

In the rest of the proof we will frequently use the integral form of solutions $y(t)$ to~\eqref{aug}, given by $y(t)=y(0)+\int_0^t g(x_i(s))ds$.

Case~\eqref{en1}: We  prove that $(x^*(t), y^*(t))$ is stationary for~\eqref{aug}. It is enough to show $g(x^*_i)=0$. Assume otherwise. Then $|y^*(t)|=|y(0)+\int_0^t g(x^*_i)ds|=|y(0)+tg(x^*_i)| \to \infty$ as $t \to \pm \infty$. This contradicts the boundedness of an $\omega$-limit set.

Case~\eqref{en2} Let $T$ be the period of $x^*(t)$. We show that $(x^*(t),y^*(t))$ is a periodic solution of~\eqref{aug} of the same period. We have: 
\begin{equation}
\begin{aligned}
    \frac{d}{dt}(y^*(t+T)-y^*(t))&=\frac{d}{dt}\int_{t}^{T+t}g(x^*_i(s))ds\\
    &=g(x^*_i(T+t))-g(x^*_i(t))\\
    &=0,
\end{aligned}
\end{equation} hence $y^*(t+T)-y^*(t)=const$. If this quantity would be non-zero, the diameter of the set $\{y^*(t):\ t \in \mathbb{R}\}$ would be infinite. However, the set $Z$ is bounded, and therefore $y^*(t+T)=y^*(t)$.

Case~\eqref{en3}: We  show that $(x^*(t),y^*(t))$ is a connecting orbit between two equilibria for the full system~\eqref{aug}. We shall only prove convergence with $t \to \infty$, the very same argument holds for $t \to -\infty$ and $\alpha$-limit sets.
The orbit $(x^*(t),y^*(t))$ is bounded and therefore it has an accumulation point as $t \to \infty$ given by $(x^{**},y^{**}) \in \omega{(x^{*}(0),y^{*}(0))}$. The point $x^{**}$ is an equilibrium for~\eqref{orig}. We will show that $(x^{**},y^{**})$ is an equilibrium. It is enough to show that $g(y^{**})=0$. Assume otherwise.
Then $y^{**}(t)=y^{**}+tg(x^{**}_i)$ which is unbounded. However, it is also a part of  $\omega{( (x^{*}(0),y^{*}(0)) )}$, since $\omega$-limit sets are invariant. Boundedness of $\omega{ ((x^{*}(0),y^{*}(0)) )}$ leads to a contradiction. 
The same process, repeated for all connecting orbits of~\eqref{orig}, 
creates a cycle of connecting orbits for~\eqref{aug}.
\end{proof}

 Now, we can proceed to the proof of Theorem~\ref{thm:main2}.

\begin{proof}
By Lemma~\ref{graphlemma}, and Remark~\ref{graphremark}, we know that the graph of the system has either a root vertex or a root cycle. We will first address the case of a root vertex. We will see that this case is somewhat degenerate. Without loss of generality let us assume that it is labelled as the 1st vertex, and that the other vertices are numbered in order of increasing path distance from vertex 1 (i.e. $j<i$ implies that the path from $1$ to $j$ is shorter than the path from $1$ to $i$) -- this is possible by Lemma~\ref{graphlemma}.

The payoffs of the root node are only affected by its own choice of strategy. Therefore, we can write $\dot{z}_1=u_1(\alpha_0)-u_1(\alpha_1)$, and, consequently, $z_1(t)=t(u_1(\alpha_0)-u_1(\alpha_1))+z_1(0)$.
This system constitutes an autonomous ODE, which trivially has the Poincar\'e-Bendixson property (as it is either completely stationary, or is divergent). 
From then on, we can add nodes, starting from vertices connected to the root vertex, and then continuing in an inductive fashion. Then, either one of the nodes diverges, or they are all stationary, and trivially satisfy the Poincar\'e-Bendixson property.
It should be noted that "divergence" in practice means that $z_i(t)$'s approach in the limit $t \to \infty$ to either $\infty$ or $-\infty$; the former implies that the player $i$ is placing almost all probability mass on strategy $\alpha_0$, and the latter -- on $\alpha_1$.

The more interesting scenario arises for the root cycle, where periodic limit sets are possible.
Enumerate these vertices by $1,\dots,N_0$, with $N_0\leq N$, 
and assume that the vertices from $N_0+1$ to $N$ are arranged in the order of increasing path distance from vertices of the cycle (possible by Remark~\ref{graphlemma}).
Observe that the system 
\begin{equation}
\begin{aligned}
\dot{z}_i&=V_i(z_{\hat{i}}),\\ 
i&=1,\dots,N_0,
\end{aligned}
\end{equation}
is an autonomous system of differential equations (as there are no edges with successors in $\{1,\dots,N_0\}$, and predecessors outside of this set), and forms a binary, cyclic game in the sense of Theorem~\ref{thm:main}.
As such, this subsystem possesses the Poincar\'e-Bendixson property. From then on, the proof continues similarly as for the root vertex. We add a vertex $N_0+1$ which has an incoming edge from the root cycle, and, by Lemma~\ref{auglemma} observe that the system  
\begin{equation}\label{next}
\begin{aligned}
\dot{z}_i&=V_i(z_{\hat{i}}),\\ 
i&= 1,\dots,N_0+1,
\end{aligned}
\end{equation}
again has the Poincar\'e-Bendixson property. The proof continues inductively w.r.to the vertices,
until we conclude that the full system $\dot{z}=V(z)$ has the Poincar\'e-Bendixson property.
\end{proof}

\begin{remark}
Theorems~\ref{thm:main},~\ref{thm:main2} apply to dynamics of fully mixed initial strategy profiles bounded away from pure strategies, as FoReL learning~\eqref{forel} is ill-defined for pure strategies.
For some learning models such as as the replicator equations~\eqref{replicatorX}~the theorems can be applied to subsystems arising when certain players assume a pure strategy profile, as in these models pure strategy profiles define invariant learning spaces.
\end{remark}

\section{From Geometry to Efficiency: Social Welfare Analysis}\label{sec:ne}

The following result shows that 
for cyclic, binary games, under additional but structurally robust assumptions 
on the payoff matrices 
   (i.e., assumptions that remain valid after small perturbations of the payoff matrices), the time-average social welfare of our FoReL dynamics is at least as high, as the social welfare $SW=\sum_i u_i$ of the worst Nash equilibrium. The proof crucially relies on the interplay of the optimal regret properties of FoReL dynamics combined with structural characterizations of the set of Nash equilibria of these games.

\begin{theorem}\label{thm:minimax}
In any  binary, cyclic game with the property that for any player $i$, the payoff entries are distinct and $$[A^{i-1,i}(\alpha_0,\alpha_0)-A^{i-1,i}(\alpha_1,\alpha_0)][A^{i-1,i}(\alpha_0,\alpha_1)-A^{i-1,i}(\alpha_1,\alpha_1)]<0,$$
the time-average of the social welfare of FoReL dynamics 
is at least that of the social welfare of the worst Nash equilibrium.
Formally,
\begin{equation}\lim\inf \frac{1}{T}\int_0^T \sum_i u_i(x(t)) dt \geq \sum_i u_i(x_{NE}),    
\end{equation}

\noindent
where $x_{NE}$ the worst case Nash equilibrium, i.e., a Nash equilibrium that minimizes the sum of utilities of all players.
\end{theorem}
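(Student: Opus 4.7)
The plan is to combine the Poincar\'e--Bendixson structure from Theorem~\ref{thm:main} with a telescoping payoff identity exploiting $\dot{z}_k=V_k(z_{k-1})$, together with the matching-pennies hypothesis which pins down the time-averages of the dynamics.

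Let $v_{k,s}(y)$ denote the expected payoff to player $k$ of the pure strategy $s\in\{1,2\}$ when the predecessor mixes with weight $y$ on strategy $1$. In the cyclic binary case, player $k$'s payoff splits in two equivalent ways,
\[
u_k(x) \;=\; v_{k,2}(x_{k-1}) - x_k\, V_k(z_{k-1}) \;=\; v_{k,1}(x_{k-1}) + (1-x_k)\, V_k(z_{k-1}),
\]
using $V_k=v_{k,2}-v_{k,1}$ from~\eqref{vf}. Since $x_k(t)=\hat{Q}_k(z_k(t))$ and $\dot{z}_k=V_k(z_{k-1})$, the chain rule turns the cross-term into a total derivative:
\[
\int_0^T x_k(t)\, V_k(z_{k-1}(t))\, dt \;=\; F_k(z_k(T))-F_k(z_k(0)),
\]
where $F_k$ is any primitive of $\hat{Q}_k$.

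Suppose first that the $\omega$-limit set of $z(\cdot)$ is bounded, as it is for equilibria and periodic orbits. Then the right-hand side above is $O(1)$, so after dividing by $T$ and subtracting the two decompositions I obtain, using the linearity of $v_{k,s}$ in $x_{k-1}$,
\[
(v_{k,2}-v_{k,1})\bigl(\bar{x}_{k-1}(T)\bigr)\;\xrightarrow[T\to\infty]{}\;0,
\qquad\bar{x}_{k-1}(T):=\tfrac{1}{T}\int_0^T x_{k-1}(t)\,dt.
\]
The hypothesis $[A^{k-1,k}(1,1)-A^{k-1,k}(2,1)][A^{k-1,k}(1,2)-A^{k-1,k}(2,2)]<0$ says precisely that the linear map $y\mapsto(v_{k,2}-v_{k,1})(y)$ has opposite signs at the endpoints $y=0,1$, hence a unique interior root $y_{k-1}^\star\in(0,1)$---the mixing probability of the fully-mixed Nash equilibrium $y^\star$. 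Thus $\bar{x}_{k-1}(T)\to y_{k-1}^\star$ and either decomposition yields $\tfrac{1}{T}\int_0^T u_k\,dt\to v_{k,s}(y_{k-1}^\star)=u_k(y^\star)$. Summing over $k$, the time-averaged social welfare converges to $\mathrm{SW}(y^\star)\ge \mathrm{SW}(x_{NE})$ by definition of the worst NE.

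For unbounded trajectories, where some $z_j(t)\to\pm\infty$ so that $x_j(t)$ approaches a pure strategy, distinctness of the entries of $A^{j,j+1}$ forces $V_{j+1}$ at this pure limit to be a nonzero constant, so $z_{j+1}(t)$ also diverges with a sign determined by the best response of $j+1$. Propagating this cascade around the cycle shows $x(t)$ converges to a pure strategy profile $s^\star$, which by consistency of best responses around the cycle is a pure Nash equilibrium; the time-average SW then converges to $\mathrm{SW}(s^\star)\ge\mathrm{SW}(x_{NE})$. Since Theorem~\ref{thm:main} tells us every $\omega$-limit set is an equilibrium, periodic orbit, or heteroclinic cycle, these two regimes cover all cases. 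The main technical obstacle I expect is the heteroclinic-cycle case: $z(t)$ stays bounded but does not settle, spending progressively longer stretches near each constituent equilibrium, so one must bound the passage times through saddle neighborhoods to show that the resulting time-weighted average of equilibrium social welfares still dominates $\mathrm{SW}(x_{NE})$---each constituent equilibrium is itself a NE so each term is $\ge\mathrm{SW}(x_{NE})$, but a clean $\liminf$ requires uniform control. A secondary check is to verify that a ``mixed'' regime, with some $z_k$ bounded and others unbounded, is incompatible with the cyclic structure and distinctness of payoffs, as implicitly used above.
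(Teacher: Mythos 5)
Your approach is genuinely different from the paper's, but it has a gap that is not the one you flagged. Your case split --- $z(\cdot)$ bounded, versus some $z_j(t)\to\pm\infty$ triggering a best-response cascade to a pure Nash profile --- is not exhaustive, and the missing case is precisely the one realized by the paper's own asymmetric $N$-pennies example. For odd $N$ there is no pure Nash equilibrium, yet almost all trajectories are attracted to a heteroclinic cycle on the \emph{boundary} of the strategy space: in $z$-coordinates the trajectory is unbounded (the cycle lives at infinity), but no $z_j$ tends monotonically to $\pm\infty$; each oscillates with growing amplitude while $x(t)$ circulates among near-pure profiles. Your cascade argument only shows that no single coordinate can settle at a pure strategy; it does not force boundedness. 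In this regime the boundary term $F_k(z_k(T))-F_k(z_k(0))$ is only $O(T)$ (since $F_k(z)\sim z$ as $z\to+\infty$), so you cannot conclude $(v_{k,2}-v_{k,1})(\bar x_{k-1}(T))\to 0$; and indeed the time-averaged social welfare on those orbits converges to the boundary-cycle value $(p+1)\tfrac{N-1}{2}$, not to the interior-Nash value, so the exact-convergence conclusion of your ``bounded'' branch is false there and the inequality must be argued by other means. Conversely, the case you single out as the hard one --- a bounded heteroclinic cycle --- is already covered by your argument, which uses only boundedness of $z_k(T)$, never convergence or control of passage times.

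For comparison, the paper's proof never classifies trajectories at all. Under the sign hypothesis, each agent $k$ possesses the unique mixed strategy making its successor indifferent (the min-max strategy of the zero-sum game defined by $A^{k,k+1}$); the resulting profile is a Nash equilibrium in which every agent receives exactly its maxmin value; and the no-regret property of continuous-time FoReL guarantees each agent's time-average payoff is at least its maxmin value up to a vanishing term. Summing over agents yields the bound for \emph{every} trajectory, bounded or not, with no appeal to Theorem~\ref{thm:main}. Your telescoping identity does buy something the paper's argument does not --- exact convergence of the time-averaged welfare to the interior-Nash welfare along bounded orbits --- but to close the theorem in full generality you would need either to treat the unbounded oscillatory regime separately or to fall back on the regret bound.
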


In other words, the Nash equilibrium is the worst imaginable outcome for all players; and the dynamical, regret minimization approach yields superior payoffs.
\begin{proof}
Lets consider the payoff matrix of  each player $i$. Recall, that by the cyclicity assumption, 
there is at most one player $k$ such that $A^{k,i}$ is a non-zero matrix, i.e., the unique predecessor of $i$, that for simplicity of notation we call $i-1$. 
By assumption, the four entries will be considered distinct. Next, we break down the analysis into two cases. As a first case, we consider the scenario where there exists at least one player with a strictly dominant strategy. The FoReL dynamics of that player strategy profile will trivially converge to playing the strictly dominant strategy with probability one. Similarly,  all players reachable from player $i$ will similarly best respond to it. This is clearly the unique NE for the binary cyclic game, so in this case the limit behavior of FoReL dynamics exactly corresponds to the unique Nash behavior and the theorem follows immediately.

Next, let's consider the case where no player has a strictly dominant strategy. In this case, we will construct a specific Nash equilibrium for the cyclic game (although it may have more than one). In this Nash equilibrium every player $i-1$ plays the unique mixed strategy that makes its successor (player $i$) indifferent between its two strategies. Such a strategy exists for each player, because otherwise there would exist a player with a strictly dominant strategy. In fact by the assumption $[A^{i-1,i}(\alpha_0,\alpha_0)-A^{i-1,i}(\alpha_1,\alpha_0)][A^{i-1,i}(\alpha_0,\alpha_1)-A^{i-1,i}(\alpha_1,\alpha_1)]<0$ such a strategy would be the $i-1$st player's min-max strategy if they participated in a zero-sum game with player $i$ defined by the payoff matrix of player $i$.
 Indeed, this assumption, along with the fact that player $i$ does not have a dominant strategy, exactly encodes that the zero-sum game  (defined by payoff matrix $A^{i-1,i}$) has an interior Nash.  
Given its predecessors behavior, player $i$ will be receiving exactly its max-min payoff no matter which strategy they select, therefore this strategy profile where each player $i-1$ just plays the strategy that makes player $i$ indifferent between their two options is a Nash equilibrium, where each player receives exactly their max-min payoffs. 
However, by~\cite{Mertikopoulos} (Lemma C.1), continuous-time FoReL dynamics are no-regret with their time-average regret converging to zero at an optimal rate of O(1/T), i.e. there exists an $\Omega_i>0$,
such that for all players $i$ we have:
\begin{equation}
    \max\limits_{p_i \in \mathcal{X}_i} \frac{1}{T} \int_0^T 
    \left(u_i(p_i; x_{-i}(t))-u_i(x(t)\right) dt \leq \frac{\Omega_i}{T}.
\end{equation}
However, the left hand side is greater or equal to \begin{equation}u_i(x_{NE})-\frac{1}{T} \int_0^T
    u_i(x_i(t))),\end{equation} since the mixed Nash equilibrium consists of max-min strategies.
Therefore, the sum over $i$ of the time-average performance is at least the sum of the max-min utilities minus a quickly vanishing term O(1/T) and the theorem follows. 
\end{proof}

\section{Examples}\label{sec:examples}

To illustrate our theoretical results,
we analyze the replicator dynamics~\eqref{replicatorX} of two classes multidimensional binary cyclic games that exhibit non-convergence and therefore non-trivial limit behavior. 
The goal of the examples is to show that all possible limit sets indicated in the Poincar\'e--Bendixson property (i.e., an equilibrium, a periodic solution, and a cycle of connecting solutions) are attainable for systems satisfying our assumptions. In addition, we plot the social welfare of simulated trajectories, relating them to the results of Theorem~\ref{thm:minimax}. Finally, we provide a counterexample in the form of a three-dimensional replicator system that violates the assumptions of our theorems and exhibits chaos.
To determine the limit sets, we  numerically integrate the initial-value problems with various starting conditions
via the \emph{lsoda} differential equation integrator~\cite{hindmarsh2005lsoda}.

\subsection{Matched-mismatched pennies game}

First, we analyze a four-dimensional game of matched-mismatched pennies.
Each player has a choice of two strategies, $\alpha_0$ and $\alpha_1$.
The payoffs for players 0 and 2 are given by 
\begin{equation}
  A^{3,0}=A^{1,2}=
\begin{bmatrix}
-1 & 1 \\
1 & -1 
\end{bmatrix}
\end{equation}
and the payoffs for players 1 and 3 are given by 
\begin{equation}
  A^{0,1}=A^{2,3}=
\begin{bmatrix}
1 & -1 \\
-1 & 1 
\end{bmatrix}.
\end{equation}
Simply put, players $0$ and $2$  try to mismatch the strategy with players $1$ and $3$,
and players $1$ and $3$  try to match them.

The system possesses three Nash equilibria, which correspond to the
following strategy profiles: $(0,0,1,1)$, $(1,1,0,0)$,
$(0.5,0.5,0.5,0.5)$, out of which the pure Nash equilibria are attracting,
and the mixed Nash equilibrium has two center directions: one repelling 
and one attracting. We  denote the mixed Nash equilibrium by $x_{MNE}$. 
Given the symmetry of the system, the plane 
$\{(t,s,t,s), \ t,s \in [0,1]\}$
is invariant,
 consists purely of periodic orbits, and forms the 
center manifold to the mixed Nash equilibrium.

\begin{figure}
    \includegraphics[width=40mm, trim= 60 0 60 0]{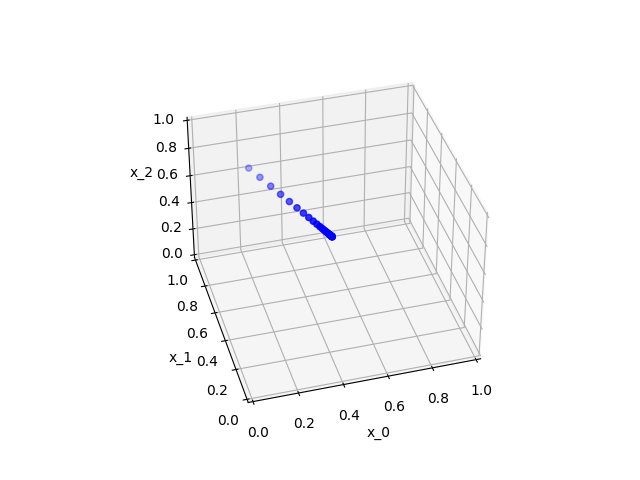}
      \includegraphics[width=40mm, trim= 60 30 80 30]{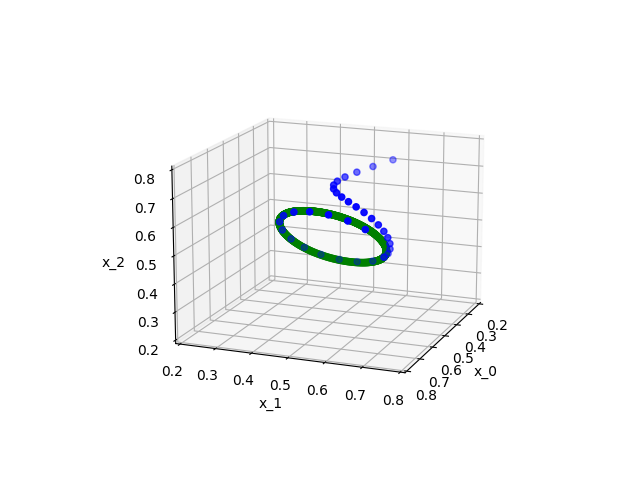}
  \caption{Limit sets in the matched-mismatched pennies system: an orbit converging to an equilibrium (left) and an orbit converging to a limit cycle (right).
  }\label{Fig1}
\end{figure}

The numerical results are consistent with Theorems~\ref{thm:main} and~\ref{thm:main2}. 
The only limit sets observed by the numerical simulations are the mixed Nash equilibrium $x_{MNE}$ itself (along a single-dimensional attracting set) and the limit cycles around it, which also appear to be of saddle nature and have a single attracting direction, see Figure~\ref{Fig1}.
Most crucially, more complicated behavior, such as chaos or invariant tori, does not emerge, despite the system being nontrivially embedded in four dimensions. 

The mixed Nash equilibrium
yields the minimax payoff vector $(0,0,0,0)$ for each player and the social welfare of $0$.
The payoff matrices satisfy the assumptions of Theorem~\ref{thm:minimax}, and the average payoffs along solutions are therefore at least non-negative. 
In fact, almost all (a set of full measure) initial conditions appear to converge to the pure equilibria at the boundary, with their time-average payoffs exceeding that of the Nash equilibrium and converging to the maximal welfare of 4, see Figure~\ref{FigPayoffs}.

\subsection{Asymmetric N-penny game}

Our second system is a system of $N$-player asymmetric mismatched pennies,
previously introduced in~\cite{kleinberg}.
There are three players, and  each  can choose between two  strategies: $\alpha_0$ and $\alpha_1$.
The payoffs for player $i$ with respect to player $i-1$ are given by the matrix
\begin{equation}
  A^{i-1, i}=
\begin{bmatrix}
0 & 1 \\
p & 0 
\end{bmatrix}.
\end{equation}
with $p>0$.

For odd $N$,
there is no Nash equilibrium in pure strategies.
In the replicator system, the pure strategy profiles are saddle-type stationary points of the ordinary differential equation,
linked by connecting orbits of mixed strategies.
The system has a unique
mixed Nash equilibrium defined by $x_i = \frac{1}{p+1}, \ i \in \{1,\dots, N\}$,
where each player obtains payoff of $\frac{p}{p+1}$.

\begin{figure}[t]
    \centering
        \includegraphics[width=42mm, trim= 40 0 40 0]{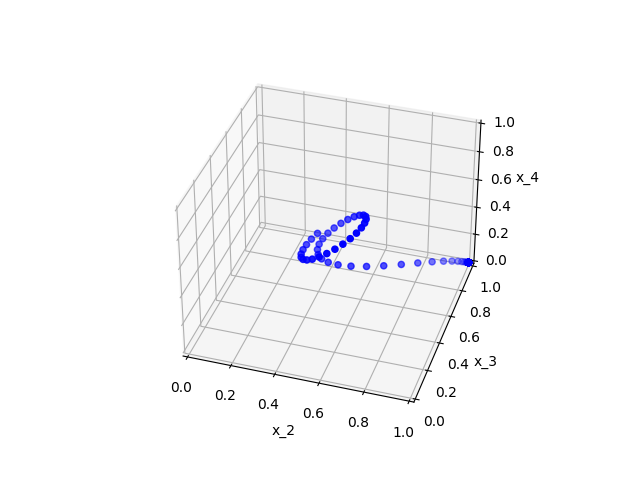}
  \includegraphics[width=42mm, trim= 40 0 40 0]{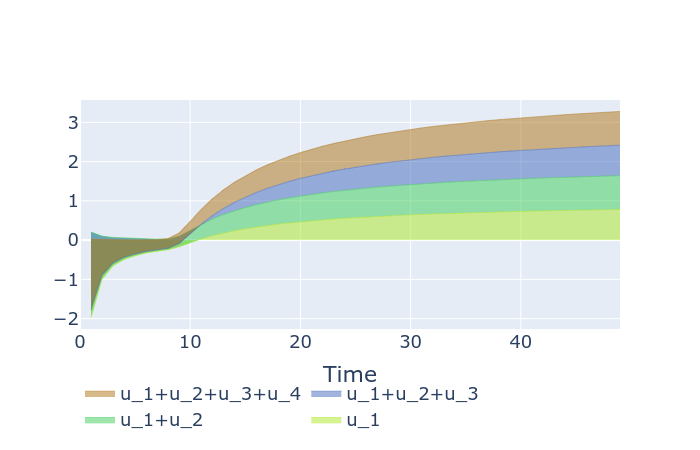}
    \includegraphics[width=42mm, trim= 40 0 40 0]{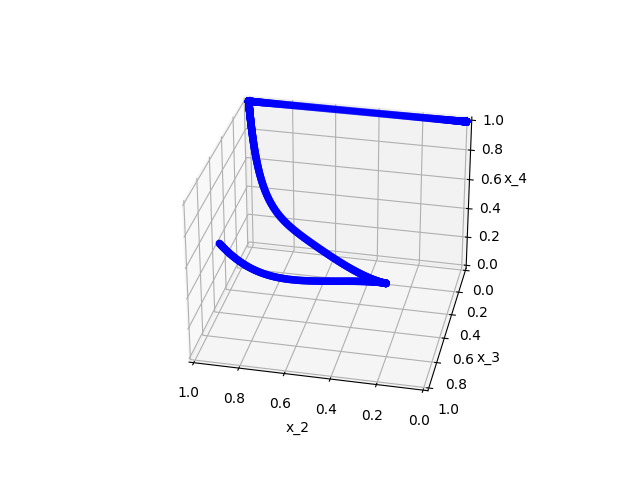} 
    \includegraphics[width=42mm, trim= 40 0 40 0]{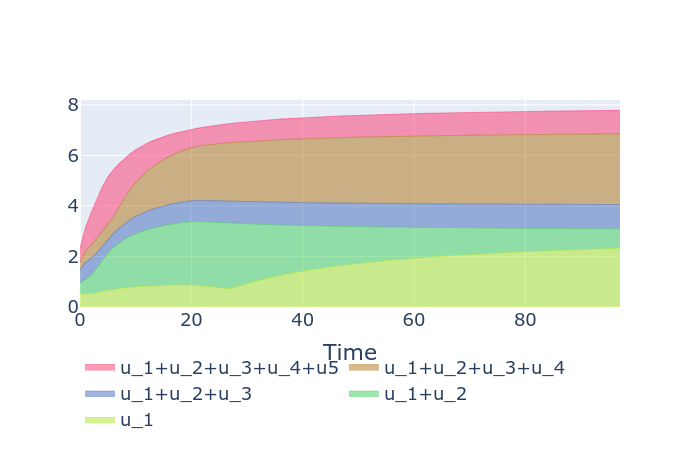}
    \caption{Time-average payoffs and social welfare of a sample learning trajectory in the matched-mismatched pennies game (top),
    and in the asymmetric 5-penny game with $p=3$
    (bottom, projection onto first three variables).}
    \label{FigPayoffs}
\end{figure}
The system was thoroughly analyzed in~\cite{kleinberg},
and the main result given
therein was that, for $N=3$ and $p>7$, all mixed strategies except for the diagonal 
converge to a sequence of orbits connecting boundary
stationary points. Moreover, the social welfare attained close to the boundary exceeds
the social welfare at the Nash equilibrium.
We extend these results.
From Theorem~\ref{thm:main} we deduce that, for all $N$ and for all $p \neq -1$,
the only limit sets in the interior are equilibria, periodic orbits, 
and cycles of connecting orbits to equilibria.
The payoff matrices satisfy the assumptions of Theorem~\ref{thm:minimax}, and, in particular, for all $p>0$, the mixed equilibrium yields the minimax payoff for each player, and time averages of payoffs along other orbits must exceed the minimax payoffs.
For almost all initial conditions,
the dynamics is attracted to the boundary cycle of average payoff $(p+1)\frac{N-1}{2}$ (see, e.g., Figure~\ref{FigPayoffs}), and indeed no chaotic emergent behavior  appears.

\subsection{A chaotic polymatrix replicator}

Our last system serves as a counterexample;
it shows that even in a binary three-player game, but without structured interactions (i.e., no cyclicity, all possible connections in the game graph), the learning trajectories of replicator dynamics can approach complex chaotic limit sets. 
The payoff matrices are given by
\begin{align}
\begin{aligned}
A^{1,1}&=
\begin{bmatrix}
\mu & 14 \\
0 & 0 
\end{bmatrix}, \quad
A^{2,1}=-A^{1,2}=
\begin{bmatrix}
-10 & 10 \\
0 & 0 
\end{bmatrix}, \\
A^{3,1}&=A^{3,2}=A^{3,3}=-A^{2,2}=
\begin{bmatrix}
-2 & 2 \\
0 & 0 
\end{bmatrix}, \\
A^{1,3}&=
\begin{bmatrix}
-25 & 29 \\
0 & 0 
\end{bmatrix},
\quad
A^{2,3}=
\begin{bmatrix}
0 & -11 \\
0 & 0 
\end{bmatrix}.
\end{aligned}
\end{align}
After some transformations (for details, see~\cite{Peixe}), we arrive at the following one-parameter system of differential equations:
\begin{align}
\begin{aligned}
\dot{x}_0 &= x_0 (1-x_0)(12-\mu +(\mu-14)x_0 -20x_1 -4x_2),\\
\dot{x}_1 &= x_1 (1-x_1)(-10+20x_0 +4x_1 -4x_2),\\
\dot{x}_2 &= x_2 (1-x_2)(27-54x_0+11x_1-4x_2),
\end{aligned}
\end{align}
where $x_i$ is the probability that player $i$ plays strategy $\alpha_0$, and $1-x_i$ is the probability that player $i$ plays $\alpha_1$.
This system was recently introduced by Peixe and Rodrigues~\cite{Peixe}, who formally showed by  combined theoretical and numerical approaches that the system contains a persistent strange (chaotic) attractor for a range of parameter values $\mu \in [1.4645,9.5055]$. We replicate their findings by integrating a sample trajectory and observing its approach to the chaotic attractor for $\mu=2.8$, see Figure~\ref{fig:chaos}. 
Due to lack of cyclicity, the game does not  guarantee the payoff structure given by Theorem~\ref{thm:minimax}.

\begin{figure}
\includegraphics[width=47mm, trim = 20 0 20 0]{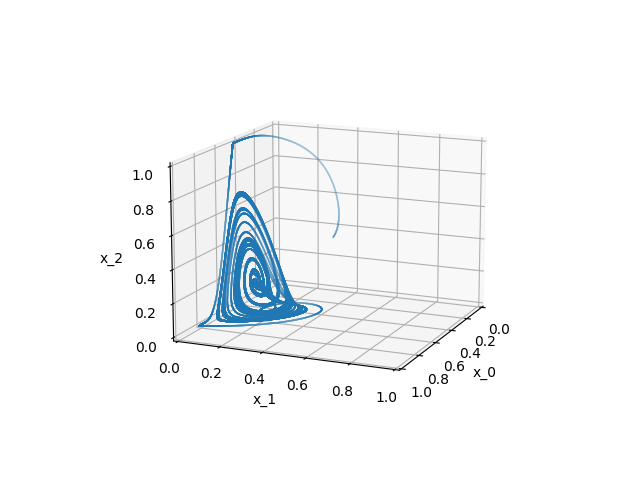}
    \includegraphics[width=35mm, trim =20 0 20 0]{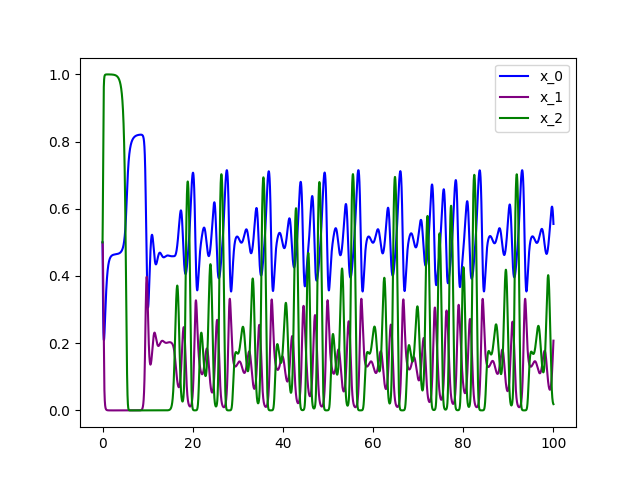}
    \caption{A learning trajectory approaching a chaotic attractor in the polymatrix replicator~\cite{Peixe} (left) and a plot of values of its coordinates (right). The game is characterized by unstructured interactions between payoffs and therefore breaks the assumptions of Theorems~\ref{thm:main} and~\ref{thm:main2}.}
    \label{fig:chaos}
\end{figure}
\balance
\section{Conclusions}

Numerous recent results regarding learning in games have established a clear separation between the idealized behavior of equilibration and the erratic, unpredictable, and typically chaotic behavior of learning dynamics even in simple games and domains.
At a first glance, this realization might seem to be a setback, but when viewed from the correct perspective it unveils  a new way of understanding learning dynamics, namely, by examining solution concepts from the topology of dynamical systems. 
Our results showcase the possibility of establishing links between the topological-combinatorial structure of multi-agent games (e.g., game graph, number of actions) to understand and constrain the topological complexity of game dynamics (Poincar\'e--Bendixson property) and finally link back to more traditional game theoretic analyses, such as calculating the efficiency of the system via social welfare. 
These connections showcase the promising advantages of this approach, 
which we hope will lead to more work along these lines in the future.

\nobalance{} 

\begin{acks}
AC was supported by the European Research Council (ERC) 
under the European Union's
Horizon 2020 research 
and innovation programme (grant agreement No.~758824 \textemdash INFLUENCE).
 \begin{figure}[h!]
   \centering
\includegraphics[width=0.4\columnwidth]{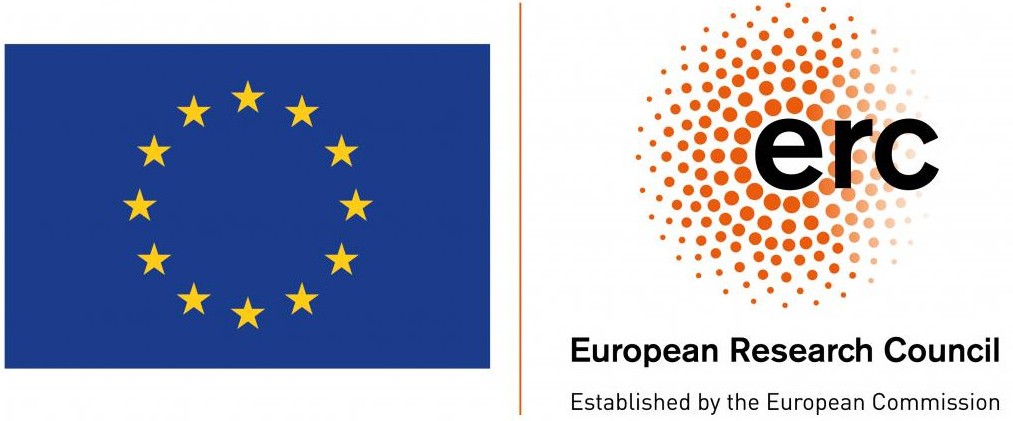}
\end{figure}

GP was supported in part by the National Research Foundation, Singapore under NRF 2018 Fellowship NRF-NRFF2018-07, AI Singapore Program (AISG Award No: AISG2-RP-2020-016), NRF2019-NRF-ANR095 ALIAS grant,  Technology and Research (A*STAR), AME Programmatic Fund (Grant No. A20H6b0151) from the Agency for Science, grant PIE-SGP-AI-2018-01 and Provost's Chair Professorship grant RGEPPV2101.

We would like to thank prof. Frans A. Oliehoek for his support, and helpful advice.
\end{acks}


\bibliographystyle{ACM-Reference-Format} 
\bibliography{bibliography}


\end{document}